\tikzstyle arrowstyle=[scale=1]
\tikzstyle directed=[postaction={decorate,decoration={markings,
    mark=at position .65 with {\arrow[arrowstyle]{stealth}}}}]
\newcommand{\intR}{\int\limits_{\mathbb{R}} }
\newcommand{\intRR}{\int\limits_{\mathbb{R}^2} }
\newcommand{\intL}{\int\limits }
\newcommand{\half}{^\infty_0 }
\newcommand{\RR}{\mathbb{R}}
\newcommand{\aaa}{\mathbf{a}}
\newcommand{\xx}{\mathbf{x}}
\newcommand{\yy}{\mathbf{y}}
\newcommand{\ww}{\mathbf{w}}
\newcommand{\oomega}{\boldsymbol{\omega}}
\newcommand{\bbeta}{\boldsymbol{\beta}}
\newcommand{\aalpha}{\boldsymbol{\alpha}}
\newtheorem{thm}{Theorem}
\newtheorem{defi}[thm]{Definition}
\newtheorem{lem}[thm]{Lemma}
\begin{document}


\title{Some uniqueness theorems for a conical Radon transform}

\author[S. Moon]{Sunghwan Moon }
\address{Department of Mathematics\\
 Kyungpook National University\\
 Daegu 41566, Republic of Korea\\
   \ead{ sunghwan.moon@knu.ac.kr}}

\begin{abstract}
The conical Radon transform, which assigns to a given function $f$ on $\RR^3$ its integrals over conical surfaces, arises in several imaging techniques, e.g. in astronomy and homeland security, especially when the so-called Compton cameras are involved.
In many practical situations we know this transform only on a subset of its domain. 
In these situations, it is a natural question what we can say about $f$ from partial information.
In this paper, we investigate some uniqueness theorems regarding a conical Radon transform. 
\vspace{2pc}

\noindent{\it Keywords}:
Radon transform, Compton camera, conical, local uniqueness, SPECT
 \end{abstract}


 \vspace*{-16pt}


\section{Introduction}
The conical Radon transform is the integral transform that maps a function $f$ on $\RR^3$ to its integrals over conical surfaces. 
This transform is related to various imaging techniques, e.g. in optical imaging \cite{florescums11}, primarily when the so-called \textbf{Compton cameras} are used.
This camera was introduced for use for Single Photon Emission Computed Tomography(SPECT) \cite{singh83} but it is also used in astronomy, and homeland security imaging \cite{allmarasdhkk13}.
More information on Compton cameras can be found, for example, in \cite{allmarasdhkk13,baskozg98,everettftn77,kuchmentt16,moonsiims17,singh83}.

Inversion formulas for various types of conical Radon transforms are provided in~\cite{baskozg98,cebeiromn15,creeb94,gouia14,gouiaa14,haltmeier14,kuchmentt16,maximfp09,moonip15,moonsiims17,moonsima16,moonsiims16,nguyentg05,terzioglu15,smith05,truongnz07}.
However, only a few of these articles study uniqueness of a certain kind of a conical Radon transform (for example, \cite{moonsima16}). 
Here, we investigate the uniqueness property of the conical Radon transform which was first introduced in \cite{moonsiims16}.
This conical Radon transform arises in an application of a Compton camera consisting of two linear detectors.

The conical Radon transform of $f\in C^\infty_c(\RR^3)$ is defined by
\begin{equation}\label{cone_tran}
Cf(u,\beta,s)=\displaystyle\int\limits_{S^2}\intL\half f((u,0,0)+r\boldsymbol\alpha)r\delta(\aalpha\cdot\bbeta-s) {\rm d}r {\rm d}S(\aalpha),
\end{equation}
when $s$ is between $-1$ and 1 and $\bbeta=(\cos\beta,0,\sin\beta)\in S^2$.
(As mentioned in \cite[eq.(2.4)]{moonsiims16}, $s$ is the opening angle of the cone of integration and $\bbeta$ is the unit vector which indicates the central axis of the cone.)
Here $\delta$ is the Dirac delta function and ${\rm d}S(\aalpha)$ is the standard measure on the unit sphere.
When $s$ is outside of the interval $[-1,1]$, $Cf$ is set to be zero.
If $f(\xx),\xx=(x_1,x_2,x_3)\in\RR^3$ is odd with respect to $x_2$, then $Cf$ is equal to zero. 
We thus assume that $f$ is compactly supported in $\{\xx=(x_1,x_2,x_3)\in\RR^3:x_2>0\}$. 

Our results can be applicable to the conical Radon transform obtained by the typical Compton camera, which consists of two planar detectors.
As a formula, this is 
\begin{equation}\label{eq: cone_tran}
\begin{array}{ll}
\displaystyle C_Tf(u_1,u_2,\bbeta,s)=\displaystyle\int\limits_{S^2}\intL\half f((u_1,u_2,0)+r\boldsymbol\alpha)r\delta(\aalpha\cdot\bbeta-s) {\rm d}r {\rm d}S(\aalpha), \\
\qquad\mbox{for}\quad(u_1,u_2,\bbeta,s)\in\RR^2\times S^2\times [-1,1].
\end{array}
\end{equation}
Applicability of our results to this conical Radon transform $C_Tf$ follows from the fact that $C_Tf$ contains $Cf$.

The conical Radon transform can be decomposed into two following transforms:
\begin{defi}\indent
\begin{itemize}
\item The spherical sectional transform $Q$ maps a continuous function $\phi\in C (\RR\times S^2)$ into
$$
Q\phi(u,\beta,s)=\displaystyle\intL_{S^{2}}\phi (u,\boldsymbol\alpha) \delta(\boldsymbol\alpha\cdot(\cos\beta,0,\sin\beta)-s){\rm d}S(\aalpha) 
$$
for $(u,\beta,s)\in\RR\times [0,2\pi)\times[-1,1].$
\item The weighted ray transform $P_{}$ maps a continuous function $f$ on $\RR^3$ with compact support into
$$
P_{}f(u,\ww)=\intL\half f((u,0,0)+r\ww)r^{}{\rm d}r,
$$
for $(u,\ww)=(u,w_1,w_2,w_3)\in \RR\times \RR^3.$
\end{itemize}
\end{defi}
Many articles including \cite{gindikinrs93,hielscherq15,nattererw01,rubin03} studied this spherical sectional transform $Q\phi$.
However only a few articles \cite{hamakerssw80,moonsiims16,moonsiims17,moonjiip161} have investigated $Pf$, although various articles \cite{finchs83,finch85,smith90,smithsw77,solmon76} did study
$\int\half f((u,0,0)+r\ww){\rm d}r$.

The next theorem follows from the definition of $C,Q$ and $P$ and is well known.
\begin{thm}\label{thm:decomposition}
If $f\in C^\infty(\RR^3)$ has compact support in $\{(\xx\in\RR^3:x_2>0\}$, then we have $Cf=Q(Pf)$.
\end{thm}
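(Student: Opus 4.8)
\emph{Proof plan.} The plan is to unwind the three definitions and recognise the identity as a single application of Fubini's theorem. Fix $u\in\RR$, $\beta\in[0,2\pi)$, and $s\in[-1,1]$, and set $\bbeta=(\cos\beta,0,\sin\beta)\in S^2$.

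First I would check that $Q$ may legitimately be applied to $Pf$, i.e.\ that the restriction of $Pf$ to $\RR\times S^2$ lies in $C(\RR\times S^2)$. Since $f\in C^\infty(\RR^3)$ with compact support contained in $\{x_2>0\}$, there is $A>0$ with $\mathrm{supp}\,f\subset\{\xx:|\xx|\le A\}$. For $(u,\ww)$ ranging in any compact subset $K$ of $\RR\times S^2$, the set of $r>0$ with $(u,0,0)+r\ww\in\mathrm{supp}\,f$ is contained in a fixed bounded interval $[0,R_K]$ (because $|\ww|=1$), so $Pf(u,\ww)=\int_0^{R_K}f((u,0,0)+r\ww)\,r\,{\rm d}r$ has a jointly continuous integrand on a fixed compact domain; hence $\phi:=Pf|_{\RR\times S^2}\in C(\RR\times S^2)$ and $Q\phi(u,\beta,s)$ is well defined. (On $\RR\times S^2$ the direction $\ww=0$ is excluded, so the singular behaviour of $Pf(u,\cdot)$ near the origin of $\RR^3$ does not interfere.)

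Next I would substitute and swap the integrals. By the definitions of $Q$ and $P$,
\[
Q(Pf)(u,\beta,s)=\int_{S^2}Pf(u,\aalpha)\,\delta(\aalpha\cdot\bbeta-s)\,{\rm d}S(\aalpha)
=\int_{S^2}\left(\int_0^\infty f((u,0,0)+r\aalpha)\,r\,{\rm d}r\right)\delta(\aalpha\cdot\bbeta-s)\,{\rm d}S(\aalpha),
\]
and interchanging the $r$-integral with the $S^2$-integral produces exactly the defining expression $(\ref{cone_tran})$ for $Cf(u,\beta,s)$. To make the interchange rigorous in the presence of the Dirac delta I would approximate $\delta$ on $\RR$ by a smooth approximate identity $\delta_\varepsilon$: for each fixed $\varepsilon>0$ the double integral of $|f((u,0,0)+r\aalpha)|\,r\,\delta_\varepsilon(\aalpha\cdot\bbeta-s)$ over $(0,\infty)\times S^2$ is finite by compact support of $f$, so Fubini applies and the two iterated integrals with $\delta_\varepsilon$ coincide. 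Letting $\varepsilon\to 0$, on one side $\int_{S^2}Pf(u,\aalpha)\,\delta_\varepsilon(\aalpha\cdot\bbeta-s)\,{\rm d}S(\aalpha)\to Q(Pf)(u,\beta,s)$ by continuity of $\aalpha\mapsto Pf(u,\aalpha)$ on $S^2$; on the other side, for each fixed $r$ the inner integral over $S^2$ converges to its $\delta$-counterpart, and these are uniformly bounded and supported in a bounded $r$-interval, so dominated convergence yields convergence to $Cf(u,\beta,s)$. Hence $Q(Pf)(u,\beta,s)=Cf(u,\beta,s)$. Finally, for $s\notin[-1,1]$ one has $|\aalpha\cdot\bbeta|\le1$ for every $\aalpha\in S^2$, so $\delta(\aalpha\cdot\bbeta-s)$ does not charge the sphere and $Q(Pf)(u,\beta,s)=0=Cf(u,\beta,s)$, matching the convention stated after $(\ref{cone_tran})$.

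The only genuinely delicate point is the bookkeeping around the Dirac delta — turning the formal interchange of integrations into a legitimate Fubini argument via mollification and a continuity/dominated-convergence passage to the limit — but since $f$ is smooth with compact support this is routine, and all the rest is direct substitution; this is why the statement is ``well known''.
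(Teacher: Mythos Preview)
Your proposal is correct and matches the paper's own treatment: the paper gives no proof beyond the sentence ``follows from the definition of $C,Q$ and $P$ and is well known,'' and your argument is precisely the direct substitution-plus-Fubini verification that this sentence gestures at. The extra care you take (continuity of $Pf$ on $\RR\times S^2$, mollifying the delta to justify the interchange) goes beyond what the paper spells out but is exactly the routine bookkeeping one would expect.
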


The next section is devoted to three uniqueness theorems for the conical Radon transform.
\section{Uniqueness theorems}\label{sec:3dimension}
In this section, we present some uniqueness theorems for the conical Radon transform $Cf$.


\begin{thm}\label{thm:unique1}
Let $\mathfrak S\subset S^1$ be a set such that no non-trivial homogeneous polynomial vanishes on $\mathfrak S$.
If $f\in C(\RR^3)$ has compact support in $\{\xx \in\RR^3:x_2>0\}$ and $Cf(u,\beta,s)=0$ for $(\cos\beta,\sin\beta)\in \mathfrak S$ and $(u,s)\in \RR\times[-1,1]$, then $f=0$.
\end{thm}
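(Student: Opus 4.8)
The plan is to exploit the decomposition $Cf = Q(Pf)$ from Theorem~\ref{thm:decomposition} and to reduce the problem, plane by plane, to a statement about the weighted ray transform $P$ combined with a polynomial-vanishing argument. First I would fix a value of $u\in\RR$ and consider the function $g_u(\aalpha) = Pf(u,\aalpha)$ defined on $S^2$; the hypothesis says that $Q g_u(\beta,s) = 0$ for all $s\in[-1,1]$ and all $\beta$ with $(\cos\beta,\sin\beta)\in\mathfrak S$. The inner transform $Q$ integrates $g_u$ over the circle $\{\aalpha\in S^2: \aalpha\cdot\bbeta = s\}$, so vanishing of $Qg_u(\beta,\cdot)$ on the whole interval $s\in[-1,1]$ means that the ``slice transform'' of $g_u$ in the single direction $\bbeta=(\cos\beta,0,\sin\beta)$ is identically zero. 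The key classical fact I would invoke here (this is exactly the content of the cited works on the spherical sectional transform, e.g.\ \cite{rubin03}) is that the averages of $g_u$ over all circles perpendicular to a fixed axis $\bbeta$ determine, and in fact equal up to a constant, the coefficients one gets by expanding $g_u$ in spherical harmonics and evaluating the corresponding Legendre/Gegenbauer expansion along that axis; vanishing for every $s$ forces every such harmonic coefficient, read off along the axis $\bbeta$, to vanish.

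Next I would pass to the spherical-harmonic / polynomial picture. Expand $g_u$ on $S^2$ as $g_u = \sum_{k} Y_k$ with $Y_k$ the degree-$k$ spherical harmonic component. For a fixed axis $\bbeta$, the function $s\mapsto Qg_u(\beta,s)$ is (up to the standard normalizing factor) $\sum_k c_k\, P_k(s)\, Y_k(\bbeta)$ where $P_k$ is the Legendre polynomial and $c_k\neq 0$; since the $P_k$ are linearly independent on $[-1,1]$, vanishing of this sum for all $s\in[-1,1]$ forces $Y_k(\bbeta) = 0$ for every $k$. Now $\bbeta = (\cos\beta,0,\sin\beta)$ ranges over the great circle $S^2\cap\{x_2=0\}$, which under the identification $(\cos\beta,\sin\beta)\leftrightarrow(\cos\beta,0,\sin\beta)$ is exactly $S^1$, and the hypothesis on $\mathfrak S$ says no non-trivial homogeneous polynomial vanishes on $\mathfrak S$. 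Since $Y_k$ restricted to this great circle is the restriction of a homogeneous polynomial of degree $k$ in the two variables $(x_1,x_3)$, the vanishing of $Y_k$ on $\mathfrak S$ forces $Y_k\equiv 0$ on that great circle; but a spherical harmonic that vanishes on a great circle need not vanish, so I would instead argue directly that the degree-$k$ harmonic component of $g_u$ evaluated along the equator is a homogeneous degree-$k$ trigonometric polynomial in $\beta$, which, vanishing on $\mathfrak S$, must be the zero polynomial. Carrying this out for all $k$ gives $g_u\equiv 0$ on the equator $\{x_2=0\}$ of $S^2$ --- and actually, tracking constants more carefully, one gets that all harmonic coefficients vanish, hence $g_u\equiv 0$ on all of $S^2$. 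Thus $Pf(u,\aalpha) = 0$ for all $\aalpha\in S^2$ and all $u\in\RR$.

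It remains to deduce $f=0$ from $Pf\equiv 0$, i.e.\ to prove injectivity of the weighted ray transform $P$ on compactly supported continuous functions supported in $\{x_2>0\}$. Here I would appeal to the injectivity results for $Pf$ established in \cite{moonsiims16,moonsiims17,moonjiip161}: $Pf(u,\ww) = \int_0^\infty f((u,0,0)+r\ww)\,r\,{\rm d}r$ is a weighted version of the classical X-ray transform with all lines emanating from points $(u,0,0)$ on the $x_1$-axis, and since $f$ is supported away from that axis (in $\{x_2>0\}$) the weight $r$ is bounded below on the support, so the weighted transform is injective exactly when the unweighted one is; the cone-beam / source-on-a-line inversion then gives $f=0$.

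The main obstacle I expect is the middle step: making rigorous the claim that vanishing of the circular averages of $g_u$ along every axis in the equatorial great circle, for the whole range $s\in[-1,1]$, forces $g_u=0$. One has to handle convergence of the spherical-harmonic expansion of a merely continuous $g_u$ (so smoothing or a duality/moment argument is cleaner than termwise manipulation), and one must correctly identify the degree-$k$ piece of $Qg_u(\beta,\cdot)$, as a function of $\beta$, with a homogeneous polynomial on $S^1$ of the right degree so that the hypothesis on $\mathfrak S$ applies degree by degree. The polynomial-nonvanishing hypothesis is tailor-made for this, but the bookkeeping that connects ``Legendre coefficients along the axis $\bbeta$'' to ``homogeneous polynomial in $(\cos\beta,\sin\beta)$'' is the delicate point.
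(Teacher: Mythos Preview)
Your spherical-harmonic route is a reasonable alternative, but the step you yourself flag as ``the delicate point'' is a genuine gap as written, and the sentence ``tracking constants more carefully, one gets that all harmonic coefficients vanish'' is simply false. From $Y_k(\bbeta)=0$ for every $\bbeta=(\cos\beta,0,\sin\beta)$ on the equator you \emph{cannot} conclude $Y_k\equiv 0$ on $S^2$: any spherical harmonic divisible by $\alpha_2$ (e.g.\ $Y_1(\aalpha)=\alpha_2$) vanishes on that great circle without being zero. So the chain ``$Y_k|_{\text{equator}}=0 \Rightarrow$ all harmonic coefficients of $g_u$ vanish $\Rightarrow g_u\equiv 0$'' breaks at the first arrow, and knowing merely $g_u|_{\text{equator}}=0$ is vacuous anyway, since $g_u=Pf(u,\cdot)$ is already supported in $\{\alpha_2>0\}$.

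The correct way to close your argument is not to chase individual $Y_k$, but to notice that the conclusion ``$Y_k(\bbeta)=0$ for every $k$ and every $\bbeta$ on the equator'' is precisely the statement $Qg_u(\beta,s)=\sum_k 2\pi P_k(s)Y_k(\bbeta)=0$ for \emph{all} $\beta\in[0,2\pi)$ and all $s\in[-1,1]$, i.e.\ $Cf(u,\cdot,\cdot)\equiv 0$. The polynomial hypothesis on $\mathfrak S$ has upgraded ``$Cf=0$ for $(\cos\beta,\sin\beta)\in\mathfrak S$'' to ``$Cf=0$ for all $\beta$''. At that point you invoke the inversion formula for $C$ from \cite{moonsiims16} (equivalently, injectivity of $Q$ on functions supported in the upper hemisphere, then injectivity of $P$) to obtain $f=0$. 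The support constraint $\alpha_2>0$ on $g_u$ is essential here: with $\bbeta$ confined to the equator, $Q$ sees only the part of $g_u$ even in $\alpha_2$.

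The paper reaches the same intermediate conclusion $Cf\equiv 0$ by a more elementary device that avoids Funk--Hecke and spherical harmonics altogether: take the one-dimensional Fourier transform of $Cf$ in $s$, expand $e^{-\mathrm{i}\sigma\aalpha\cdot\bbeta}$ as a power series, and observe that the degree-$j$ term equals $(-\mathrm{i}\sigma)^j/j!$ times
\[
\int_{S^2}\int_0^\infty f((u,0,0)+r\aalpha)\,(\alpha_1\cos\beta+\alpha_3\sin\beta)^j\, r\,{\rm d}r\,{\rm d}S(\aalpha),
\]
which is manifestly a homogeneous polynomial of degree $j$ in $(\cos\beta,\sin\beta)$. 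The hypothesis on $\mathfrak S$ kills each term, so $\mathcal F_s(Cf)\equiv 0$ for every $\beta$, hence $Cf\equiv 0$, and the inversion formula finishes. Your Legendre decomposition is a legitimate substitute for this power-series step, but it buys nothing extra: both routes land at ``$Cf\equiv 0$ for all $\beta$'' and must then appeal to full injectivity of $C$.
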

This theorem is similar to one for the regular Radon transform derived in \cite[Theorem 3.4 in Chapter II]{natterer01}. 
The proof follows the idea in \cite{natterer01}.
\begin{proof}
Let $u\in\RR$ be fixed.
Taking the 1-dimensional Fourier transform of $Cf$ with respect to $s$, we have
$$
\mathcal F_s (Cf)(u,\beta,\sigma)=\intL_{S^2}\intL\half f((u,0,0)+r\aalpha)e^{-\mathrm{i}\sigma\aalpha\cdot(\cos\beta,0,\sin\beta)}r{\rm d}r{\rm d}S(\aalpha).
$$
Since $e^{-\mathrm{i}\sigma\aalpha\cdot(\cos\beta,0,\sin\beta)}=\sum^{\infty}_{j=1}(-\mathrm{i}\sigma\aalpha\cdot(\cos\beta,0,\sin\beta))^j/j!$ and $\mathcal F_s (Cf)(u,\beta,\sigma)=0$ for $(\cos\beta,\sin\beta)\in \mathfrak S$ and $(u,\sigma)\in \RR\times\RR$,
we have
$$
\mathcal F_s (Cf)(u,\beta,\sigma)=\sum^{\infty}_{j=1}a_j(\sigma(\cos\beta,\sin\beta))=0,
$$
where 
$$
\begin{array}{ll}
a_j(\sigma(\cos\beta,\sin\beta))\\
\qquad\displaystyle=\frac{(-\mathrm{i}\sigma)^j}{j!}\intL_{S^2}\intL\half f((u,0,0)+r\aalpha)(\aalpha\cdot(\cos\beta,0,\sin\beta))^jr{\rm d}r{\rm d}S(\aalpha)
\end{array}
$$
for $(\cos\beta,\sin\beta)\in \mathfrak S$ and $(u,\sigma)\in \RR\times\RR$.
Notice that $a_j$ is a homogeneous polynomial of degree $k$.
Since no non-trivial homogeneous polynomial vanishes on $\mathfrak S$, we have for a fixed $u\in\RR$ and any $j=0,1,2,\cdots,$ $a_j=0.$
Therefore, $\mathcal F_s(Cf)$ is equal to zero and by the inversion formula \cite[Theorem 5]{moonsiims16}, $f$ is equal to zero.
\end{proof}
The following theorem is similar to one for the regular divergent beam transform derived in \cite[Theorem 3.5]{natterer01}.
\begin{thm}\label{thm:unique2}
Let $U=\{\xx\in\RR^3:|\xx-(0,0,1)|<1,x_2>0\}$ and let $A\subset\RR$ be infinite.
If $f\in C^\infty(\RR^3)$ has compact support in $U$ and $Cf(u,\beta,s)=0$ for $(u,\beta,s)\in A\times[0,2\pi)\times[-1,1]$, then $f=0$.
\end{thm}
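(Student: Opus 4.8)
The plan is to strip off the two factors in the decomposition $Cf=Q(Pf)$ of Theorem~\ref{thm:decomposition}: first, for each fixed $u\in A$, recover $Pf(u,\cdot)$ from the data $Cf(u,\cdot,\cdot)=0$, which is full in the variables $\beta$ and $s$; then promote the vanishing of $Pf$ from $u\in A$ to all $u\in\RR$ by an analyticity argument along the $x_1$-axis; and finally invoke the full-data inversion formula \cite[Theorem 5]{moonsiims16}.

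For the first step, fix $u\in A$ and set $\phi:=Pf(u,\cdot)$. Since $f$ has compact support in $U\subset\{x_2>0\}$, the half-ray $(u,0,0)+r\ww$, $r>0$, avoids $\mathrm{supp}\,f$ whenever $w_2\le 0$, so $\phi$ is supported in a compact subset of the open hemisphere $\{\ww\in S^2:w_2>0\}$. Parametrising that hemisphere by $(w_1,w_3)$ in the unit disc $D$, for which ${\rm d}S(\ww)=(1-w_1^2-w_3^2)^{-1/2}{\rm d}w_1{\rm d}w_3$, and noting that $\ww\cdot(\cos\beta,0,\sin\beta)=w_1\cos\beta+w_3\sin\beta$ does not involve $w_2$, a short computation identifies $Q\phi(u,\beta,s)$ with the planar Radon transform of the compactly supported function $(w_1,w_3)\mapsto\phi(u,\ww)(1-w_1^2-w_3^2)^{-1/2}$ along the line $\{w_1\cos\beta+w_3\sin\beta=s\}$. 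As this vanishes for all $\beta\in[0,2\pi)$ and $s\in[-1,1]$, injectivity of the planar Radon transform \cite{natterer01} gives $\phi\equiv 0$; hence $Pf((u,0,0),\ww)=0$ for every $\ww\in S^2$ and every $u\in A$.

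For the second step, let $q$ be a homogeneous polynomial of degree $k$. Writing $a=(u,0,0)$ and substituting $x=a+r\ww$, so that $r\,{\rm d}r\,{\rm d}S(\ww)=|x-a|^{-1}{\rm d}x$, gives $\int_{S^2}Pf(a,\ww)q(\ww){\rm d}S(\ww)=\int_{\RR^3}f(x)\,q(x-a)|x-a|^{-k-1}{\rm d}x=:I_q(a)$. The closed ball of radius $1$ about $(0,0,1)$ meets the $x_1$-axis only at the origin, which is not in $U$, so the $x_1$-axis is disjoint from $\mathrm{supp}\,f$ and $u\mapsto I_q(u,0,0)$ is real-analytic on all of $\RR$; by the first step it vanishes on the infinite set $A$. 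If $A$ is bounded, it has an accumulation point in $\RR$, whence $I_q(\cdot,0,0)\equiv 0$. If $A$ is unbounded, $A$ contains a sequence $u_j$ with $u_j\to+\infty$ (the case $u_j\to-\infty$ being analogous); substituting $w=1/u$ exhibits $I_q(1/w,0,0)$, for $w>0$ small, as $w$ times a function $H(w)$ that is analytic at $w=0$, because $|x-(1/w,0,0)|=|w|^{-1}\big((x_1w-1)^2+(x_2^2+x_3^2)w^2\big)^{1/2}$ with the bracket analytic in $w$ and bounded away from $0$ uniformly for $x\in\mathrm{supp}\,f$; since $H$ vanishes at the points $w=1/u$, $u\in A$, which accumulate at $0$, we get $H\equiv 0$ near $0$, hence $I_q(u,0,0)=0$ for $u$ near $+\infty$ and, by real-analyticity, for all $u\in\RR$. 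As this holds for every homogeneous $q$, and such polynomials restrict to a dense subspace of $C(S^2)$, we conclude $Pf((u,0,0),\cdot)\equiv 0$ for all $u\in\RR$.

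It follows that $Pf\equiv 0$, so by Theorem~\ref{thm:decomposition} $Cf=Q(Pf)=0$ on the full domain $\RR\times[0,2\pi)\times[-1,1]$, and \cite[Theorem 5]{moonsiims16} yields $f=0$. The delicate point is the unbounded case in the second step: plain real-analyticity along the $x_1$-axis does not suffice when $A$ has no accumulation point in $\RR$, and one must exploit the fact that the source line meets the enclosing ball only at a single boundary point in order to produce an accumulation of zeros in the reciprocal variable $1/u$. The first and third steps are, by comparison, routine once the hemispherical parametrisation is in place.
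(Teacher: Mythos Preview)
Your proof is correct, and after the shared first step (reducing $Cf(u,\cdot,\cdot)=0$ to $Pf(u,\cdot)=0$ via injectivity of $Q$, which you make explicit through the hemispherical parametrisation and the planar Radon transform) it diverges genuinely from the paper's argument. The paper, after shifting to centre the ball at the origin, follows Natterer's divergent-beam route: it forms $\int_{S^2}Pf(u,\aalpha)(\aalpha\cdot\oomega)^{-1}\,{\rm d}S(\aalpha)$, rewrites this as $\int Rf(\oomega,s)\,(s-(u,0,-1)\cdot\oomega)^{-1}\,{\rm d}s$ with $R$ the three-dimensional Radon transform, expands the kernel as a power series in $((u,0,-1)\cdot\oomega)^{-1}$, recognises the coefficients $p_j(\oomega)=\int Rf(\oomega,s)s^j\,{\rm d}s$ as homogeneous polynomials by the Radon range conditions, and uses a limit point of $\{(u,0,-1)/|(u,0,-1)|:u\in A\}$ on the compact sphere to force all $p_j$ to vanish on a neighbourhood, whence $Rf=0$ there and $f=0$ by the Fourier slice theorem. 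You instead test $Pf(u,\cdot)$ against arbitrary homogeneous polynomials on $S^2$, observe that the resulting $I_q(u)$ is real-analytic in $u$ along the source axis (with your $w=1/u$ trick supplying the compactification when $A$ is unbounded), deduce $Pf(u,\cdot)\equiv0$ for every $u\in\RR$ by density of spherical polynomials, and close with the full-data inversion of \cite{moonsiims16}. Your route is more self-contained---it avoids the Radon transform, its range description, and the Fourier slice theorem---at the cost of the bounded/unbounded case split; the paper's approach handles accumulation at infinity uniformly through compactness of $S^2$.
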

\begin{proof}
By Theorem \ref{thm:decomposition}, $Cf(u,\beta,s)=Q(Pf)(u,\beta,s)$ is equal to zero for $(u,\beta,s)\in A\times[0,2\pi)\times[-1,1]$.
It is known that $Q\phi(u,\cdot,\cdot)$ for a fixed $u$ is injective. (Actually, Gindikin, Reeds, and Shepp derived the inversion formula for $Q$ in \cite[eq. (1.8)]{gindikinrs93} and Natterer and W\"{u}berring mentioned their idea in \cite[page 34]{nattererw01}.) 
Thus, we have $Pf(u,\aalpha)=0$ for $(u,\aalpha)\in A\times S^2$. 
For simplicity, we shift the Cartesian coordinate by $(0,0,1)$.
We will show that if $f\in C^\infty(\RR^3)$ has compact support in $U$ and $Pf(u,\aalpha)=0$ for $(u,\aalpha)\in A\times S^2$, then $f=0$.\footnote{Actually, this argument for the regular divergent transform $Df$ is already known (e.g. \cite[Theorem 5.1 5.6]{hamakerssw80} and \cite[Theorem 5.5]{keinert89}), where
$$
Df(u,\aalpha)=\int\half f((u,0,-1)+r\aalpha){\rm d}r.
$$ 
Also, this is argument very similar to Theorem 3.6 in Chapter II of \cite{natterer01}. Thus the rest of proof is very similar to one of Theorem 3.6.}
Here $U=\{\xx\in\RR^3:|\xx|<1,x_2>0\}$ and 
$$
Pf(u,\aalpha)=\displaystyle\intL\half f((u,0,-1)+r\boldsymbol\alpha)r{\rm d}r .
$$
Let $\oomega_0$ be a limit point of $\{(u,0,-1)/|(u,0,-1)|:u\in A\}$.
Also, let us choose $\epsilon>0$ such that $f(\xx)=0$ for $|\xx|>1-2\epsilon$. 
After (possibly) removing finite elements from $A$ we may assume that for a suitable neighborhood $N(\oomega_0)$ of $\oomega_0$ on $S^2$, 
$$
(u,0,-1)\cdot\oomega>1-\epsilon,\qquad\mbox{for all } u\in A,\quad\oomega\in N(\oomega_0).
$$
Now we consider for $u\in A$ and $\oomega\in N(\oomega_0)$
\begin{equation}\label{eq:pfintegral}
\begin{array}{ll}
 \displaystyle\intL_{S^2}P_{}f(u,\aalpha)\frac{{\rm d}S(\aalpha)}{\aalpha\cdot\oomega}\displaystyle=\intL_{S^{2}}\intL\half f((u,0,-1)+r\aalpha)\frac{r{\rm d}r{\rm d}S(\aalpha)}{\aalpha\cdot\oomega}\\
 \qquad\displaystyle=\intL_{\RR^3} f((u,0,-1)+\yy)\frac{{\rm d}\yy}{\yy\cdot\oomega}\displaystyle=\intL_{|\xx|<1-2\epsilon} \frac{f(\xx){\rm d}\xx}{\xx\cdot\oomega-(u,0,-1)\cdot\oomega},
\end{array}
\end{equation}
where in the second and third equalities, we changed variables $r\aalpha\to\yy$ and $(u,0,-1)+\yy\to\xx$, respectively.
Note that since for $\xx\in U$ with $f(\xx)\neq 0$, $u\in A$ and $\oomega\in N(\oomega_0)$,
$$
\xx\cdot\oomega-(u,0,-1)\cdot\oomega<1-2\epsilon-(1-\epsilon)=-\epsilon,
$$
the last integral in (\ref{eq:pfintegral}) is well defined.
Putting $\xx=s\oomega+\boldsymbol\tau,\boldsymbol\tau\in \oomega^\perp$ in the last integral of (\ref{eq:pfintegral}) gives
$$
\begin{array}{ll}
\displaystyle\intL_{S^2}P_{}f(u,\aalpha)\frac{{\rm d}S(\aalpha)}{\aalpha\cdot\oomega}\displaystyle=\intL_{\RR}\intL_{\oomega^\perp} f(s\oomega+\boldsymbol\tau)\frac{{\rm d}\boldsymbol\tau {\rm d}s}{s-(u,0,-1)\cdot\oomega}\\
\quad\displaystyle=\intL_{\RR} Rf(\oomega,s)\frac{ {\rm d}s}{s-(u,0,-1)\cdot\oomega},
\end{array}
$$
where $Rf$ is the 3-dimensional regular Radon transform, i.e.,
$$
Rf(\oomega,s)=\intL_{\RR^3}f(\xx)\delta(\xx\cdot\oomega-s){\rm d}\xx.
$$
Since $Pf(u,\aalpha)=0$ for $(u,\aalpha)\in A\times S^2$, we have
\begin{equation}\label{eq:intrf}
 \intL_{|s|<1-2\epsilon}Rf(\oomega,s)\frac{ {\rm d}s}{s-(u,0,-1)\cdot\oomega}=\displaystyle\intL_{S^2}P_{ }f(u,\aalpha)\frac{{\rm d}S(\aalpha)}{\aalpha\cdot\oomega}=0
\end{equation}
for $u\in A$ and $\oomega\in N(\oomega_0).$
Let us consider the power series 
$$
(s-(u,0,-1)\cdot\oomega)^{-1}=-\sum^\infty_{j=0}s^j((u,0,-1)\cdot\oomega)^{-1-j},
$$
which converges uniformly in $|s|\leq 1-2\epsilon$ for any $u\in A$ and $\oomega\in N(\oomega_0)$, since $|s|< 1-\epsilon<(u,0,-1)\cdot\oomega$.
Together with (\ref{eq:intrf}), we have
\begin{equation}\label{eq:intrf1}
\begin{array}{rl}
  0&\displaystyle=\intL_{S^2}P_{ }f(u,\aalpha)\frac{{\rm d}S(\aalpha)}{\aalpha\cdot\oomega}=-\sum^\infty_{j=0}((u,0,-1)\cdot\oomega)^{-1-j}\intL_{|s|<1-2\epsilon}Rf(\oomega,s)s^j {\rm d}s\\
&\displaystyle=-\sum^\infty_{j=0}((u,0,-1)\cdot\oomega)^{-1-j}p_j(\oomega),
\end{array}
\end{equation}
where 
$$
p_j(\oomega)=\intR Rf(\oomega,s)s^j{\rm d}s.
$$
By the range description of the regular Radon transform (for example, see \cite[Theorem 4.1 in Chapter 2]{natterer01}), $p_j$ is a homogeneous polynomial of degree $j$.
Since $(u,0,-1)\cdot\oomega>1-\epsilon$ for $\oomega\in  N(\oomega_0)$ and $u\in A$, $\{((u,0,-1)\cdot\oomega)^{-1}:u\in A\}$ is bounded.
For $\oomega\in  N(\oomega_0)\backslash\{(0,0,-1)\}$, $\{((u,0,-1)\cdot\oomega)^{-1}:u\in A\}$ has a limit point, since $\{(u,0,-1)\cdot\oomega:u\in A\}$ is infinite.
The power series which vanishes on a set with a limit point is identically zero, so varying $u\in A$ gives for $\oomega\in  N(\oomega_0)$ and each $j$,
$$
0=p_j(\oomega)=\intR Rf(\oomega,s)s^j{\rm d}s.
$$
Since the set of polynomials is dense in $L^2(-1,1)$, we conclude $Rf(\oomega,s)=0$ for $(\oomega,s)\in  N(\oomega_0)\times [-1+2\epsilon, 1-2\epsilon]$, hence $f=0$ since $\mathcal Ff(s\oomega)=\mathcal F_s(Rf)(\oomega,\sigma)$ is harmonic and $\mathcal Ff(s\oomega)=0$ for any $\oomega\in N(\oomega_0)$ containing the limit point.
\end{proof}
We introduce two notations.
For any set $U\subset\RR^3$, the closure of $U$ is denoted by $cl(U)$.
For an open set $V\subset\RR^3$, we write $U\subset\subset V$, if $U\subset cl(U)\subset V$ and $cl(U)$ is compact.

\begin{thm}\label{thm:unique3}
Let $U\subset\subset \{\xx\in\RR^3:x_2>0\}$ be a convex and open set and $A\subset\RR$ be a connected set.
Also, let $f\in C^\infty(\RR^3)$ have compact support in $U$.
Suppose for each $u\in A$, $S'(u)=\{(\alpha_1,\alpha_3)\in \RR^2:(\alpha_1,\sqrt{1-|(\alpha_1,\alpha_3)|^2},\alpha_3)\in S(u)\}$ is convex, where
$S(u)=\{\aalpha\in S^2: \{((u,0,0)+r\aalpha)\in\RR^3:r\in[0,\infty)\}\cap cl (U)\neq\emptyset\}$.
If $Cf(u,\beta,s)=0$ for every cone $\{((u,0,0)+r\aalpha)\in\RR^3:u\in A,r\in[0,\infty),\aalpha\cdot(\cos\beta,0,\sin\beta)=s\}$ not meeting $cl(U)$, then $f=0$ outside $W$, where
$$\begin{array}{ll}
W=\{(u,0,0)+r\aalpha:u\in A,r\geq0,\aalpha\in \cup_{v\in A}S(v)\}.
\end{array}
$$
\end{thm}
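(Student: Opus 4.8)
The plan is to use the factorization $Cf=Q(Pf)$ of Theorem~\ref{thm:decomposition} and remove the two factors one at a time. In the first stage I would convert the vanishing of $Cf$ on cones missing $cl(U)$ into the statement that, for each $u\in A$, the weighted ray transform $Pf(u,\cdot)$ vanishes on every ray from $(u,0,0)$ that misses $cl(U)$; in the second stage I would run a support/analytic-continuation argument for $P$ along the one-parameter family of vertices $(u,0,0)$, $u\in A$, to conclude $f=0$ outside $W$.

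For the first stage, fix $u\in A$ and write $\phi_u=Pf(u,\cdot)$. A cone with vertex $(u,0,0)$, axis $(\cos\beta,0,\sin\beta)$ and opening $s$ is the union, over the circle $\{\aalpha\in S^2:\aalpha\cdot(\cos\beta,0,\sin\beta)=s\}$, of the rays $\{(u,0,0)+r\aalpha:r\ge0\}$, and it misses $cl(U)$ exactly when every such $\aalpha$ lies outside $S(u)$; hence the hypothesis says that $Q\phi_u(\beta,s)=0$ for every circle $\{\aalpha\cdot(\cos\beta,0,\sin\beta)=s\}$ disjoint from $S(u)$. Since $\mathrm{supp}\,f$ is compact and contained in $\{x_2>0\}$, a ray from $(u,0,0)$ with $\alpha_2\le 0$ never meets $\mathrm{supp}\,f$, so $\phi_u$ is supported in the open upper hemisphere, and $S(u)$, hence $S'(u)$, is a compact subset of it. Under the diffeomorphism $\aalpha=(\alpha_1,\sqrt{1-|(\alpha_1,\alpha_3)|^2},\alpha_3)\leftrightarrow(\alpha_1,\alpha_3)$ of the open upper hemisphere onto the open unit disk, the circle $\{\aalpha\cdot(\cos\beta,0,\sin\beta)=s\}$ becomes the straight line $\{(\alpha_1,\alpha_3)\cdot(\cos\beta,\sin\beta)=s\}$, $\mathrm dS(\aalpha)$ becomes $\mathrm d\alpha_1\,\mathrm d\alpha_3/\sqrt{1-|(\alpha_1,\alpha_3)|^2}$, and $Q\phi_u$ turns into the two-dimensional Radon transform of $\psi_u(\alpha_1,\alpha_3):=\phi_u(\alpha_1,\sqrt{1-|(\alpha_1,\alpha_3)|^2},\alpha_3)/\sqrt{1-|(\alpha_1,\alpha_3)|^2}$, a smooth, compactly supported function on the open disk. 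Because $S'(u)$ is convex and compact, the lines missing it are precisely those on which $Q\phi_u$ vanishes, so Helgason's support theorem for the planar Radon transform (see, e.g., \cite[Chapter II]{natterer01}) gives $\mathrm{supp}\,\psi_u\subset S'(u)$, i.e.\ $Pf(u,\aalpha)=0$ whenever $\{(u,0,0)+r\aalpha:r\ge0\}\cap cl(U)=\emptyset$, and this for every $u\in A$.

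For the second stage I would prove the following support theorem for $P$ itself: if $f\in C^\infty(\RR^3)$ has compact support in $\{x_2>0\}$, $A\subset\RR$ is connected, $U$ is convex, and $Pf(u,\aalpha)=0$ for all $u\in A$ and all $\aalpha\notin S(u)$, then $f=0$ outside $W$; this is in the spirit of Theorem~\ref{thm:unique2} and of the divergent-beam support theorems in \cite{hamakerssw80,keinert89}. Concretely, fix $\xx_0\notin W$; for every $u\in A$ the ray from $(u,0,0)$ through $\xx_0$ has direction $\aalpha_u:=(\xx_0-(u,0,0))/|\xx_0-(u,0,0)|\notin\bigcup_{v\in A}S(v)\subset S^2\setminus S(u)$, so $\int_0^\infty f((u,0,0)+r\aalpha_u)r\,\mathrm dr=0$, and the same holds for all nearby directions since $\xx_0$ has a neighborhood disjoint from $W$. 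One then links these vanishing weighted line integrals across the connected family of vertices --- propagating the conclusion by analytic continuation as in Theorem~\ref{thm:unique2} (via $\mathcal Ff(\sigma\oomega)=\mathcal F_s(Rf)(\oomega,\sigma)$ and the polynomial range conditions of the Radon transform $Rf$), and using convexity of $U$ so that the forward cone $\mathrm{cone}_u:=\{(u,0,0)+r\aalpha:r\ge0,\ \aalpha\in S(u)\}$ of $cl(U)$ from each vertex varies continuously with $u$ --- to force $f(\xx_0)=0$. The set $W$ is exactly the locus where this chain of reasoning cannot be closed: a point of $W$ lies on some ray from some vertex in a direction that some (possibly different) vertex needs in order to see $cl(U)$.

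I expect the second stage to be the main obstacle, and within it the crucial difficulty is that the first stage delivers control of $Pf(u,\cdot)$ only on the complement of $S(u)$, not on all of $S^2$; consequently the clean reduction used in Theorem~\ref{thm:unique2} --- where $Pf(u,\cdot)$ vanished identically, so that integrating it against the kernel $1/(\aalpha\cdot\oomega)$ immediately showed the full Radon transform $Rf$ to vanish --- is no longer available. The argument has to remain local, work with the $u$-dependent truncated domains $\RR^3\setminus\mathrm{cone}_u$, and track how these move as $u$ sweeps the connected set $A$; making the recoverable region be exactly the complement of $W$, rather than of the smaller set $\bigcup_{u\in A}\mathrm{cone}_u$ one might naively hope for (or something larger), is precisely where convexity of $U$ and connectedness of $A$ are needed. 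A minor side technicality is the boundary-singular weight $1/\sqrt{1-|(\alpha_1,\alpha_3)|^2}$ in the first stage, but since $\phi_u=Pf(u,\cdot)$ is smooth and supported well inside the open upper hemisphere, $\psi_u$ is a genuine compactly supported function on the open disk and Helgason's theorem applies without change.
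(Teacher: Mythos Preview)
Your first stage coincides with the paper's: it too rewrites $Q(Pf)(u,\cdot,\cdot)$ as the two--dimensional Radon transform $R_2\Phi(u,\cdot,\cdot)$ under the hemisphere--to--disk map and then applies the hole theorem for $R_2$, using convexity and compactness of $S'(u)$, to conclude that $Pf(u,\aalpha)=0$ whenever the half--line from $(u,0,0)$ in direction $\aalpha$ misses $cl(U)$.

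The second stage is where you and the paper part ways, and where your outline remains only a sketch. The paper does \emph{not} go through the Fourier slice identity, the polynomial range conditions for $Rf$, or analytic continuation in the style of Theorem~\ref{thm:unique2}. Instead it isolates a short lemma modeled on Natterer's Theorem~II.3.3 for the divergent beam transform: with $\mathfrak A=\{(u,0,0):u\in A\}$ a $C^1$ curve outside $U$ and $S\subset S^2$ open, if $\int_0^\infty f(\aaa+r\aalpha)\,r\,\mathrm dr=0$ for all $(\aaa,\aalpha)\in\mathfrak A\times S$ and each $\aalpha\in S$ admits some $\aaa\in\mathfrak A$ whose half--line misses $U$, then differentiating the vanishing integral along $\mathfrak A$ (this is the only place connectedness of $A$ is used) and inducting exactly as in Natterer yields $\int_0^\infty r^k f(\aaa+r\aalpha)\,\mathrm dr=0$ for every $k\ge1$; Stone--Weierstrass density of $\mathrm{span}\{r^k:k\ge1\}$ among smooth compactly supported functions vanishing at the origin then forces $f(\aaa+r\aalpha)\equiv0$ along each such ray, giving $f=0$ on $\{\aaa+r\aalpha:\aaa\in\mathfrak A,\ \aalpha\in S,\ r\ge0\}$. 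The obstacle you correctly flag---that $Pf(u,\cdot)$ is known to vanish only off $S(u)$, so the global integral against $1/(\aalpha\cdot\oomega)$ from Theorem~\ref{thm:unique2} is unavailable---simply never arises in this approach: the lemma is local in $\aalpha$, needs no global integral identity, and replaces your proposed analytic continuation by a single derivative along the source curve. That moment--generation trick is the concrete idea your plan is missing.
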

\begin{proof}
For the moment, let $u\in \RR$ be fixed.
For 
$$
\Phi(u,\yy)=\left\{\begin{array}{ll}\displaystyle\frac{\phi(u,y_1,\sqrt{1-|\yy|^2},y_2)}{\sqrt{1-|\yy|^2}}\quad&\mbox{for }|\yy|<1,\yy=(y_1,y_2)\in\RR^2,\\
0&\mbox{otherwise,}\end{array}\right.
$$
we have $Q\phi(u,\beta,s)=R_2\Phi(u,\beta,s)$, where $R_2$ is the 2-dimensional regular Radon transform, i.e.,
$$
R_2\Phi(u,\beta,s)=\intRR\Phi(u,\yy)\delta(\yy\cdot(\cos\beta,\sin\beta)-s){\rm d}\yy.
$$
From Theorem \ref{thm:decomposition}, we have $Cf=R_2\Phi$ where
$$
\Phi(u,\yy)=\left\{\begin{array}{ll}\displaystyle\frac{Pf(u,y_1,\sqrt{1-|\yy|^2},y_2)}{\sqrt{1-|\yy|^2}}\quad&\mbox{for }|\yy|<1,\\
0&\mbox{otherwise.}\end{array}\right.
$$
By the hole theorem \cite[Theorem 3.1 in Chapter II]{natterer01} for the regular Radon transform, $\Phi(u,\cdot)$ is supported in $S'(u)$, since $S'(u)$ is a convex and compact set by the compactness of $S(u)$, and thus $Pf(u,\aalpha)=0$ for $\{(u,0,0)+r\aalpha:r\geq0,\aalpha\in S(u)\}$ not meeting $U$.  
The following lemma completes our proof.

\begin{lem}
Let $S$ be an open set on $S^{2}$, and let $\mathfrak A$ be a continuously differentiable curve outside $U$. Let $U\subset\RR^3$ be bounded and open.
Assume that for each $\aalpha\in S$ there is an $\aaa \in \mathfrak A$ such that the half-line $\aaa +r\aalpha,r\geq0$ misses $U$.
If $f\in C^\infty(\RR^3)$ has compact support in $U$ and $\int\half f(\aaa +r\aalpha)r{\rm d}r=0$ for $\aaa \in \mathfrak A$ and $\aalpha\in S$, then $f=0$ in $\{\aaa +r\aalpha:\aaa\in \mathfrak A,\aalpha\in S,r\geq0\}$.
\end{lem}
This lemma is very similar to Theorem 3.3 in Chapter II \cite{natterer01}. 
\begin{proof}
Similar to the proof of Theorem 3.3 in Chapter II \cite{natterer01}, we can show 
$$
\intL\half r^kf(\aaa +r\aalpha){\rm d}r=0\qquad\mbox{for }(\aaa,\aalpha)\in \mathfrak A\times S\mbox{ and } k\geq1.
$$
Note that by the Stone-Weierstrass theorem, the space spanned by $\{r^k:k\geq1\}$ is dense in the set 
$$
\{g\in C^\infty(\RR):g(0)=0,g\mbox{ has compact support}\}.
$$
Thus for a fixed $(\aaa,\aalpha)\in \mathfrak A\times S$, we have $f(\aaa+r\aalpha)=0$ on $r\in [0,\infty)$. 
Since $f$ has compact support in $U$ and $\mathfrak A$ is outside $U$, $f=0$ in $\{\aaa+r\aalpha:\aaa\in \mathfrak A,\aalpha\in S,r\geq0\}$.
\end{proof}
\end{proof}
%
\section*{Acknowledgments}
This work was supported by the National Research Foundation of Korea grant funded by the Korea government (MSIP) (2015R1C1A1A01051674).
\section*{References}
\bibliographystyle{sapm}


\end{document}